\newcommand{\ignore}[1]{}
\newcommand{\PROOF}{\begin{proof}}%
\newcommand{\QED}{\end{proof}}
\newcommand{\pair}[1]{\langle #1 \rangle}
\newcommand{\binary}{\{0,1\}^*}
\newcommand{\p}{{\mathrm{p}}}
\newcommand{\ptwo}{{\p_{\thinspace\negthinspace_2}}}
\newcommand{\pspace}{{\mathrm{pspace}}}
\newcommand{\co}[1]{\mathrm{co}#1}
\newcommand{\DTIME}{\mathrm{DTIME}}
\newcommand{\DTIMEF}{\mathrm{DTIMEF}}
\newcommand{\NP}{{\ensuremath{\mathrm{NP}}}} 
\newcommand{\coNP}{\co{\NP}}
\newcommand{\BPP}{{\rm BPP}}
\newcommand{\PSPACE}{{\rm PSPACE}}
\renewcommand{\P}{\ensuremath{{\mathrm P}}}
\newcommand{\EXP}{{\rm EXP}}
\newcommand{\SigmaEXPkPiEXPk}{\SigmaEXPkPiEXPk}
\newcommand{\PH}{{\rm PH}}
\newcommand{\T}{\mathrm{T}}
\newcommand{\leqp}{\leq^\p}
\newcommand{\leqpT}{\leqp_\T}
\renewcommand{\Pr}{\mathrm{P}_r}
\renewcommand{\leqp}{\leq^\P}
\newtheorem{theorem}{Theorem}[section]
\newtheorem{corollary}[theorem]{Corollary}
\newtheorem{lemma}[theorem]{Lemma}
\newtheorem{proposition}[theorem]{Proposition}
\theoremstyle{definition}
\newtheorem*{definition}{Definition}
\newtheorem*{defn}{Definition}
\newtheorem{example}[theorem]{Example}
\newtheorem*{example*}{Example}
\newtheorem*{examples*}{Examples}
\theoremstyle{remark}
\numberwithin{equation}{section}
\numberwithin{figure}{section}
\newcommand{\LA}{L_A}
\def\binary{\lbrace 0,1 \rbrace}
\def\binarystar{\lbrace 0,1 \rbrace ^ *}
\newcommand{\oraclesPNP}{\{A \mid \P^A = \NP^A\}}
\title{\bf Polynomial-Time Random Oracles and 
  Separating Complexity Classes}
\author{John M. Hitchcock 
\\ Department of Computer Science
  \\ University of Wyoming
\\ jhitchco@cs.uwyo.edu
\and 
Adewale Sekoni 
\\ Department of Computer Science
\\ University of Wyoming
\\ asekoni@uwyo.edu
\and Hadi Shafei
\\ Department of Mathematics and Computer Science
\\ Northern Michigan University
\\ hshafei@nmu.edu
}
\date{}
\begin{document}

\maketitle

\begin{abstract}

Bennett and Gill (1981) showed that $\P^A \neq \NP^A \neq \coNP^A$ for
a random oracle $A$, with probability 1. We investigate whether this
result extends to individual polynomial-time random oracles. We
consider two notions of random oracles: p-random oracles in the sense
of martingales and resource-bounded measure (Lutz, 1992; Ambos-Spies
et al., 1997), and p-betting-game random oracles using the betting
games generalization of resource-bounded measure (Buhrman et al.,
2000). Every p-betting-game random oracle is also p-random; whether
the two notions are equivalent is an open problem.

\begin{enumerate}[\upshape (1)]
\item\label{abstract:one} We first show that $\P^A \neq \NP^A$ for
  every oracle $A$ that is p-betting-game random.
\end{enumerate}
Ideally, we would extend (\ref{abstract:one}) to p-random oracles. We
show that answering this either way would imply an unrelativized
complexity class separation:
\begin{enumerate}[\upshape(1)]
\setcounter{enumi}{1}
\item If $\P^A \neq \NP^A$ relative to every p-random oracle $A$, then
  $\BPP \neq \EXP$.
\item If $\P^A = \NP^A$ relative to some p-random oracle $A$,
  then $\P \neq \PSPACE$.
\end{enumerate}

Rossman, Servedio, and Tan (2015) showed that the
polynomial-time hierarchy is infinite relative to a random oracle,
solving a longstanding open problem. We consider whether we can extend
(\ref{abstract:one}) to show that $\PH^A$ is infinite relative to
oracles $A$ that are p-betting-game random.
Showing that $\PH^A$ separates at even its first level
would also imply an unrelativized
complexity class separation:
\begin{enumerate}[\upshape (1)]
\setcounter{enumi}{3}
\item If $\NP^A \neq \coNP^A$ for a $\p$-betting-game measure 1 class
  of oracles $A$, then $\NP \neq \EXP$.
\item If $\PH^A$ is infinite relative to every $\p$-random oracle $A$, then $\PH
  \neq \EXP$.
\end{enumerate}

\end{abstract}

 \begin{section}{Introduction}
Bennett and Gill \cite{BennettGill81} initiated the study of random
oracles in computational complexity, proving that $\P^A \neq \NP^A$
for a random oracle $A$, with probability 1. Subsequent work showed
that this holds for {\em individual} random oracles. Book, Lutz, and
Wagner \cite{Lutz:BLW} showed that $\P^A \neq \NP^A$ for every oracle
$A$ that is algorithmically random in the sense of Martin-L{\"o}f
\cite{MartinLof66}.  Lutz and Schmidt \cite{Lutz:CSRPO} improved this
further to show $\P^A \neq \NP^A$ for every oracle $A$ that is
pspace-random \cite{Lutz:AEHNC}.

We investigate whether this extends to individual polynomial-time
random oracles \cite{Lutz:AEHNC,AmbMay97}. To show that $\P^A \neq
\NP^A$ for p-random oracles $A$, we need to show that if $\P^A =
\NP^A$, then there is a polynomial-time martingale that succeeds on
$A$. This means that if $A$ makes $\P^A = \NP^A$, then $A$ is somehow
predictable or simple.

Allender and Strauss \cite{AllStr94} proved that %
  $\{A \mid \P^A \neq \BPP^A \}$ has p-measure 0, which implies
  that $\P^A = \BPP^A$ for every p-random oracle $A$. This strengthens
  another result of Bennett and Gill \cite{BennettGill81} that $\P^A =
  \BPP^A$ holds for a random oracle $A$, with probability 1.  Allender
  and Strauss's proof relies on derandomization \cite{NisWig94} and is
  a different approach than Bennett and Gill.  For $\P$ vs $\NP$
  oracles, the best known is the pspace-randomness result of Lutz and
  Schmidt \cite{Lutz:CSRPO}.  In related work, Kautz and Miltersen
  \cite{KauMil96} showed that if $A$ is an algorithmically random
  oracle, then $\NP^A$ does not have p-measure 0.  Because the class
  $\oraclesPNP$ has Hausdorff dimension 1 \cite{Hitchcock:HDOC}, there
  is a fundamental limit to how strongly a martingale can succeed on
  the class.

Each oracle $A$ is associated with a {\em test language} $\LA$. This
language is tally and $0^n \in \LA$ if and only if in the $2^n$
tribes of $n$ strings following $0^n$, there is at least one tribe
contained in $A$.  (See Section \ref{sec:betting} for a precise
definition of $\LA$. Bennett and Gill used a slightly different, but
equivalent formulation of the test language.)  It is clear that $\LA
\in \NP^A$. From \cite{BennettGill81}, we know that $\{ A \mid \LA
\in \P^A \}$ has Lebesgue measure 0.  Since
$\P^A = \NP^A$ implies $\LA \in \P^A$, it follows that $\oraclesPNP$
has measure 0. We would like to show $\{ A \mid \LA \in \P^A \}$ has
p-measure 0.

Intuitively, if $\LA \in \P^A$, we would like to predict membership of
strings in $A$. This would be relatively simple if the $\P^A$ algorithm
asked only nonadaptive queries. However, since the queries may be
adaptive, there are potentially exponentially many queries -- too many
to be considered by a polynomial-time martingale.

The difficulty is martingales are forced to bet on strings in
lexicographic order. Buhrman et al. \cite{BvMRSS01} introduced an
extension of resource-bounded measure using {\em betting
  games}. Betting games are similar to martingales but they may
adaptively choose the order in which they bet on strings. Whether
betting games are equivalent to martingales is an open question
\cite{BvMRSS01}. The adaptiveness in betting games allows us to simulate
$\P^A$ algorithms. We show in Section \ref{sec:betting} that there
is a p-betting game succeeding on $\{ A \mid \LA \in
\P^A\}$. Therefore $\P^A \neq \NP^A$ for every p-betting-game random
oracle $A$.

 In Section \ref{sec:limitations}, we consider whether there are
 limitations to extending the betting games result. 
We show that determining whether or not
$\oraclesPNP$ has polynomial-time measure 0 (with respect to martingales)
 would imply a separation of complexity classes:
 \begin{itemize}
\item If $\oraclesPNP$ has $\p$-measure 0, then $\BPP \neq \EXP$.
\item If $\oraclesPNP$ does not have $\p$-measure 0, then $\P \neq \PSPACE$.
 \end{itemize}
This shows that determining the $\p$-measure of $\oraclesPNP$, or
resolving whether $\P^A \neq \NP^A$ for all $\p$-random $A$, is likely
beyond current techniques.

Bennett and Gill \cite{BennettGill81} also showed that $\NP^A \neq
\coNP^A$ for a random oracle $A$, with probability 1. Rossman,
Servedio, and Tan \cite{RoSeTa15} answered a longtime open question
\cite{Hastad86} by extending Bennett and Gill's result to separate every
level of the polynomial-time hierarchy. They proved an average case
depth hierarchy theorem for Boolean circuits which implies that the
polynomial-time hierarchy is infinite relative to a random oracle.
Can we show that $\PH$ is infinite relative to polynomial-time random
oracles as well? We show that extending our main result to separate
$\PH^A$ at even the first level would separate $\NP$ from
$\EXP$:
\begin{itemize}
\item If %
$\{A \mid \NP^A = \coNP^A\}$ has $\p$-betting-game measure 0, then
  $\NP \neq \EXP$.
\item If $\PH^A$ is infinite relative to every $\p$-random oracle $A$,
  then $\PH \neq \EXP$.
\end{itemize}

 \end{section}
 
 \begin{section}{Preliminaries}\label{sec:prelim}
We use standard notation. The binary alphabet is $\Sigma = \{0,1\}$,
the set of all binary strings is $\Sigma^*$, the set of all binary strings of  
length $n$ is $\Sigma^n$, and the set of all infinite binary sequences is
$\Sigma^\infty$. The empty string is denoted by $\lambda$.  We use
the standard enumeration of strings,
$s_0=\lambda,s_1=0,s_2=1,s_3=00,s_4=01,\ldots$, and the standard lexicographic  
ordering of strings corresponds to this enumeration. The characteristic 
sequence of a language $A$ is the sequence $\chi_A \in \Sigma^\infty$, where
$\chi_A[n] = 1 \iff s_n \in A$. We refer to $\chi_A[s_n] = \chi_A[n]$ as the 
characteristic bit of $s_n$ in $A$. A language $A$ can alternatively be seen 
as a subset of $\Sigma^*$, or as an element of $\Sigma^\infty$ via 
identification with its characteristic sequence $\chi_A$. Given strings $x,y$ 
we denote by $[x,y]$ the set of all strings $z$ such that $x\leq z \leq y$. 
For any string $s_n$ and number $k$, $s_n+k$ is the string $s_{n+k}$; e.g.
$\lambda + 4 = 01$. Similarly we denote by $A[x,y]$ the substring of the 
characteristic sequence $\chi_A$ that corresponds to the characteristic bits
of the strings in $[x,y]$.

\subsection{Martingales and Betting Games}

We now give a brief overview of martingales and betting games, and
how they are applied in computational complexity to define
resource-bounded measures and randomness notions. For further details,
we refer to \cite{Lutz:AEHNC,Lutz:QSET,AmbMay97,BvMRSS01,Harkins:ELBG}.

Betting games, which are also called nonmonotonic martingales,
originated in the field of algorithmic information theory.  In that
setting they yield the notion of Kolmogorov-Loveland randomness
(generalizing Kolmogorov-Loveland stochasticity)
\cite{Muchnik1998,MeMiNiStRe06}.  The concept was introduced to
computational complexity by Buhrman et al.~\cite{BvMRSS01}.
First, we recall the definition of a martingale:

\begin{definition}  A {\em martingale} is a function $d : \Sigma^* \to
  [0,\infty)$ such that for all $w \in \Sigma^*$, we have the
  following averaging condition:  
$$d(w) = \frac{d(w0)+d(w1)}{2}.$$
\end{definition}

Intuitively, a martingale is betting in order on the
characteristic sequence of an unknown language.  The martingale
starts with finite initial capital $d(\lambda)$.  The quantity $d(w)$
represents the current capital the martingale has after betting on
the first $|w|$ bits of a sequence that begins with $w$.  The
quantities $\pi(w,0) = d(w0)/2d(w)$ and $\pi(w,1) = d(w1)/2d(w)$
represent the fraction of its current capital that the martingale is
wagering on $0$ and $1$, respectively, being the next bit of the
sequence.  This next bit is revealed and the martingale has
$d(w0)=2 \pi(w,0) d(w)$ in the case of a 0 and $d(w1) = 2 \pi(w,1) d(w)$ in
the case of a 1.

Betting games are a generalization of martingales and have the
additional capability of selecting which position in a sequence, or
equivalently, which string in a language, to bet upon next.  A betting
game is permitted to select strings in a nonmonotone order, that is,
it may bet on longer strings, then shorter strings, then longer
strings again (with the important restriction that it may not bet on
the same string twice). Like martingales, betting games must also
satisfy the averaging law, i.e. the average of the betting game's
capital after betting on a string $s$ when $s$ belongs and when $s$
doesn't belong to the language is the same as its capital before
betting on $s$. We use the following definition of a betting game from 
\cite{BvMRSS01}.

\begin{definition}%
A betting game $G$ is an oracle Turing machine that maintains a
``capital tape'' and a ``bet tape,'' in addition to its standard query
tape and worktapes. The game works in rounds $i = 1, 2, 3, \ldots$ as
follows. At the beginning of each round $i$, the capital tape holds a
nonnegative rational number $C_{i-1}$. The initial capital $C_0$ is
some positive rational number. $G$ computes a query string $x_i$ to
bet on, a bet amount $B_i, 0 \leq B_i \leq C_{i-1}$, and a bet sign
$b_i\in\{-1,+1\}$. The computation is legal so long as $x_i$ does not
belong to the set $\{ x_1, \cdots , x_{i-1}\}$ of strings queried in
earlier rounds. $G$ ends round $i$ by entering a special query
state. For a given oracle language $A$, if $x_i\in A$ and $b_i =+1$,
or if $x_i \not\in A$ and $b_i=-1$, then the new capital is given by
$C_i := C_{i-1} + B_i$, else by $C_i := C_{i-1}-B_i$. We charge $M$
for the time required to write the numerator and denominator of the
new capital $C_i$ down. The query and bet tapes are blanked, and G
proceeds to round $i+1$.
\end{definition}

It is easy to see from the above definition that $b_i$ and $B_i$ can easily be 
computed from the current capital $C_i := C_{i-1}+b_iB_i$ of the betting game.
Therefore, we can equivalently define a betting game by describing the 
computation of the current capital $C_i$ without explicitly specifying the 
computation of $b_i$ and $B_i$. We do this because it is clearer and more 
intuitive to describe the computation of the current capital of the betting 
game presented in the next section.

\begin{definition}[]
If a betting game $G$ earns unbounded capital on a language $A$ (in
the sense that for every constant $c$ there is a point at which the capital
exceeds $c$ when betting on $A$), we say that $G$ {\em succeeds on} $A$.  The 
{\em success set} of a betting game $G$, denoted $S^\infty[G]$, is the set of 
all languages on which $G$ succeeds.  A betting game $G$ {\em succeeds on} a 
class $X$ of languages if $X \subseteq S^\infty[G]$.
\end{definition}

By adding a resource bound $\Delta$ on the computation of a betting
game or martingale, we get notions of resource-bounded measure on
$\Sigma^\infty$. For this paper the resource bounds we use are
$\p=\DTIMEF(n^{O(1)}),$ $\ptwo=\DTIMEF(2^{(\lg n)^{O(1)}})$, and
$\pspace=\mathrm{DSPACEF}(n^{O(1)})$.  We say a class $X \subseteq
\Sigma^\infty$ has $\Delta$-betting-game measure 0, if there is a
$\Delta$-computable betting game that succeeds on every language in
it. It has $\Delta$-measure 0 if the betting game is also a martingale
\cite{Lutz:AEHNC}. A class $X$ has $\Delta$-betting-game measure 1 if
$X^c$ has $\Delta$-betting-game measure 0. Similarly, $X$ has
$\Delta$-measure 1 if $X^c$ has $\Delta$-measure 0. A language $A$ is
$\Delta$-betting-game random if there is no $\Delta$-computable
betting game that succeeds on $A$. Similarly, $A$ is $\Delta$-random
if there is no $\Delta$-computable martingale that succeeds on $A$.

The ability of the betting game to examine a sequence nonmonotonically
makes determining its running time complicated, since each language
can induce a unique computation of the betting game.  In other words,
the betting game may choose to examine strings in different orders
depending upon the language it is wagering against.  Buhrman et
al.~looked at a betting game as an infinite process on a language,
rather than a finite process on a string.  They used the following
definition:

\begin{definition}[]
  A betting game $G$ runs in time $t(2^n)$ if for all languages $A$,
  every query of length $n$ made by $G$ occurs in the first $t(2^n)$
  steps of the computation.
\end{definition}

Specifically, once a $t(2^n)$-time-bounded betting game uses $t(2^n)$
computational steps, it cannot go back and select any
string of length $n$. Most importantly, no polynomial-time betting game can 
succeed on the class %
$\EXP =\DTIME(2^{n^{O(1)}})$.

\subsection{Martingales and Betting Games: Intuitive view}
Intuitively, a betting game can be viewed as the strategy of a gambler who bets
on infinite sequence of strings. The gambler starts with initial capital $C$, 
then begins to query strings to bet on.
The gambler's goal is to grow the capital 
$C$ without bound. The same view holds for martingales with the restriction 
that the gambler must bet on the strings in the standard ordering.

 \end{section}

\section{Betting Game Random Oracles}\label{sec:betting}

In this section we show that $\P^A \neq \NP^A$ for every $\p$-betting-game 
random oracle.

\begin{theorem}
\label{EbettingGameThm}\label{th:betting}
The class $\{A \ |\ \P^A \neq \NP^A\}$ has $\p$-betting-game measure 1. In
particular, $\P^A \neq \NP^A$ for every $\p$-betting-game random
  oracle $A$.
\end{theorem}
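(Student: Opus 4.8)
The plan is to prove the stronger statement that the complementary class $\{A \mid \P^A = \NP^A\}$ has $\p$-betting-game measure $0$. Since one can guess a tribe and verify with $n$ oracle queries that it lies in $A$, the test language satisfies $\LA \in \NP^A$ for every $A$; hence $\P^A = \NP^A$ implies $\LA \in \P^A$, and it suffices to exhibit a single $\p$-betting game that succeeds on the larger class $\{A \mid \LA \in \P^A\}$. I would build this game by dovetailing over an enumeration $M_1, M_2, \ldots$ of polynomial-time oracle machines (say $M_i$ clocked to $n^i + i$), devoting a $2^{-i}$ fraction of the capital to a subgame tailored to $M_i$. Because a betting game's capital is always nonnegative, it is enough to guarantee that whenever $M_i^A$ decides $\LA$, the $i$-th subgame earns unbounded capital; the $2^{-i}$ weighting then drives the total capital to infinity.

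Fix one machine $M = M_i$ with running time $p$. For a sparse increasing sequence of lengths $n$, the subgame processes the block of $2^n$ tribes following $0^n$. The key move --- and the reason betting games succeed where martingales seem stuck --- is that I would first \emph{nonmonotonically} simulate $M^A(0^n)$, reading its (at most $p(n)$, possibly long and out-of-lexicographic-order) oracle queries by betting $0$ on them, thereby learning the bit $a = M^A(0^n)$ cheaply. I would then bet on the tribe strings that $M$ did \emph{not} query, wagering the subgame's capital on the event ``$\LA(0^n) = a$''. With $2^n$ tribes of $n$ strings each, a fixed tribe lies in $A$ with probability $2^{-n}$ and the probability that $0^n \in \LA$ tends to $1 - e^{-1}$, bounded strictly inside $(0,1)$; since revealing only $p(n) \ll 2^n \cdot n$ bits perturbs this negligibly, the conditional probability of the event stays bounded away from both $0$ and $1$. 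When $M$ is correct the event holds, so this wager multiplies the subgame's capital by a factor bounded away from $1$.

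The one situation in which this fails is the crux of the argument: if $M$'s own queries happen to reveal an entire tribe contained in $A$, then $a = 1$ is already forced and the wager above has no edge. The resolution is to run, in parallel on the queried strings, a second martingale $D'$ that stakes only a tiny fraction on the \emph{low-probability} event that some fully-queried tribe lies entirely in $A$. Because $M$ can fully query at most $p(n)/n$ tribes and each is contained in $A$ with probability $2^{-n}$, this event has probability at most $p(n)\,2^{-n}$; hence $D'$ costs almost nothing when the event fails (so the main wager still secures the constant-factor gain), while $D'$ gains a factor of roughly $2^n/p(n)$ exactly when the bad case occurs. Combining the two, \emph{every} processed length multiplies the subgame's capital by a factor bounded away from $1$, so infinitely many lengths yield unbounded capital whenever $M^A = \LA$.

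I expect the main conceptual obstacle to be precisely this guarantee of a per-length edge in \emph{all} cases --- handling $M$'s adaptive queries through nonmonotonic reading and neutralizing the ``$M$ discovers a full tribe'' case with the auxiliary martingale $D'$ --- since a strictly monotone martingale cannot cheaply uncover the adaptive queries at all. The remaining work is technical bookkeeping that I would carry out with care but do not expect to be problematic: scheduling so that each query of length $m$ is made within the first $t(2^m)$ steps (by resolving all strings in nondecreasing length order, so that any query shorter than the current tribe block has already been decided); ensuring no string is bet upon twice, either across stages or across the dovetailed subgames, by allocating disjoint blocks of strings and reusing recorded values for repeated queries, as in the betting-game combination arguments of Buhrman et al.; and verifying that the $\p$ time bound is met given that the tribe block for $0^n$ has size $2^n \cdot n = \poly(2^n)$.
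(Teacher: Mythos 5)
Your overall architecture matches the paper's: the same Tribes test language $\LA$, the same reduction to succeeding on $\{A \mid \LA \in \P^A\}$, and the same essential use of nonmonotonicity to read off $M$'s adaptive oracle queries with zero-stake bets. The gap is in the core betting step. You assert that because $M$ reveals only $p(n) \ll n2^n$ bits, ``the conditional probability of the event stays bounded away from both $0$ and $1$,'' with the sole exception of a fully revealed tribe, which you patch with $D'$. The lower bound is fine (the untouched tribes alone keep $\mathrm{Pr}(\mathcal{B}_n \mid \omega)$ at roughly $1-1/e$ or more), but the upper bound is false: $M$ can query $n-1$ of the $n$ strings in each of $\lfloor p(n)/(n-1)\rfloor = \omega(1)$ distinct tribes and see all $1$s, leaving each such tribe inside $A$ with conditional probability $1/2$ and hence driving $\mathrm{Pr}(\mathcal{B}_n \mid \omega)$ up to $1 - 2^{-\omega(1)}$ without fully revealing any tribe. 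In that situation your main wager on ``$\LA(0^n)=1$'' gains only a factor $1/\mathrm{Pr}(\mathcal{B}_n \mid \omega) = 1+o(1)$, and $D'$ gains nothing, since its event (a fully queried tribe contained in $A$) did not occur. If this happens at every test length, and $0^n \in \LA$ at every test length so the $a=0$ case never rescues you, the product of your per-length factors can converge and the subgame for the correct machine fails. Nor can $D'$ simply be widened to the event ``$\mathrm{Pr}(\mathcal{B}_n \mid \omega)$ exceeds $1-\epsilon$'': by fairness that event can have probability as large as $(1-1/e)/(1-\epsilon)$, a constant, so only a constant factor can be won on it --- which, to be fair, is exactly the kind of repair that would work (win a factor $1/q$ on an event of probability $q<1$ bounded away from $1$, and bet the complement otherwise), but it is not what you wrote.

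The paper avoids this issue by never conditioning its ``positive'' bet on $M$'s queries. It keeps an infinite sequence of fresh shares $a_1, a_2, \ldots$ and at each stage runs the conditional-probability martingale for $\mathcal{B}_{n_j}$ over \emph{all} $n_j 2^{n_j}$ tribe bits starting from the unconditional prior $\approx 1-1/e$; this wins a factor $\approx (1-1/e)^{-1}$ whenever $0^{n_j} \in \LA$ no matter what $M$'s queries revealed, and when it loses, the game just moves on to the next fresh share. A separate family of shares $b_i$ is wagered only when $M_i(0^{n_j})=0$, on the event that no unqueried tribe lies in $A$, gaining a factor $\approx e$ whenever that answer is correct. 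Success then follows from a two-case global analysis (cofinitely many test points in $\LA$ versus infinitely many outside it) rather than from a per-length edge, and the machines are tried sequentially and discarded when caught erring rather than dovetailed --- which also sidesteps the need to combine infinitely many betting games in parallel, a delicate point given that betting games are not known to satisfy a union lemma.
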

\begin{proof}
Given a language $A$ we define the test language
\[\LA=\{0^{n}\ |\ \textrm{Tribes}_{2^n,n}(A[0^n+1,0^n+n2^n]) = 1\},\]
where $\textrm{Tribes}_{2^n,n}:\binary^{n2^n}\longrightarrow\binary$ is defined 
as follows. Given $w\in\binary^{n2^n}$, first we view $w$ as concatenation of 
$2^n$ length $n$ strings
$w_1,w_2,\cdots,w_{2^n}$; i.e. $w = w_1w_2\cdots w_{2^n}.$
$\textrm{Tribes}_{2^n,n}(w)$ is $1$ if and only if $w_i = 1^n$ for
some $i$.
Secondly, we view $w$ as the substring $A[0^n+1,0^n+n2^n]$ of 
the characteristic sequence of some language $A$. With both views in mind, we 
define a tribe to be the set of strings whose characteristic bits are encoded 
by some $w_i$. For example, given any $i\in[1,2^n]$, the set of strings
$[0^n+(i-1)n+1,0^n+in]$ is a tribe because its characteristic bits are encoded 
by $w_i$. Since the $n$ strings in any tribe have length $O(n)$, an $\NP$ 
oracle machine can easily verify the membership of any $0^n$, therefore
$\LA \in \NP^A$. Now we define a betting game $G$ that succeeds on the set
$X=\{A \ |\ \P^A = \NP^A\}$, thereby proving the theorem. Our betting game $G$ 
is going to simulate oracle Turing machines on some strings in the set
$\{0^{n}\ |\ n\in\mathbb{N}\}$. Let $M_1, M_2, \cdots$ be an enumeration of all 
oracle TMs, where $M_i$ runs in time at most $n^{\lg i}+i$ on inputs of length 
$n$. The initial capital of $G$ is $2$ and we view it as composed of infinite 
``shares'' $a_i = b_i = 2^{-i}, i \in \mathbb{N}$ that are used by $G$ to bet on 
some of the strings it queries.
 
Before we go into the details of the implementation of $G$, we give a high
level view. The strategy of $G$ to succeed on $X$ is quite simple. For any
language $A$, the cardinality of $\{0^n \ |\ 0^n \notin \LA\}$ is either 
finite or infinite. When it is finite, after seeing a finite number of strings 
all following strings will  belong to $\LA$. $G$ uses ``shares'' $a_i$ 
reserved at its initialization to bet in this situation. On the other hand when 
it is infinite and $A \in X$ we can find an oracle TM $M_i$ that decides
$\LA$. Most importantly this TM rejects its input infinitely often and it is 
only in this situation that we bet with the $b_i$ ``shares''. Details follow.
  
First we specify the order in which $G$ queries strings followed by which 
strings it bets on. $G$ operates sequentially in stages $1, 2,\cdots$. In stage
$j$, $G$ queries $0^{n_j}$, where $n_j$ is the smallest integer such that all 
the strings queried in stage $j-1$ have length  less than $n_j$. $G$ then runs 
the oracle TM $M_{i+1}$ on $0^{n_j}$, where $i$ is the number of TMs simulated 
in the previous stages whose output was inconsistent with $\LA$ in one of the 
previous stages. During the simulation of $M_{i+1}$, $G$ answers any queries 
made by the TM either by looking up the string from its history, or if the string 
isn't in its history, then $G$ queries it. After the simulation, $G$ queries in 
the standard lexicographic order all the strings in the $2^{n_j}$ tribes that 
follow $0^{n_j}$ that haven't already been queried. Finally, to complete stage 
$j$, $G$ queries all the remaining strings of length at most the length of the 
longest string queried by $G$ so far.
  
Now we specify which strings $G$ bets on and how it bets with the $a_i$'s and 
$b_i$'s. In stage $j$, let $i$ and $n_j$ be such that $M_i$ is the Turing 
machine simulated in this stage and $0^{n_j}$ is the input it will
be simulated on. The only strings $G$ bets on will be the $n_j2^{n_j}$ strings 
following $0^{n_j}$; i.e. the tribes. We use $a_l$ and $b_i$, two of the 
infinite ``shares'' of our initial capital reserved by $G$ for betting, where 
$l$ is the smallest positive integer such that $a_l\neq0$. As will be shown 
later we do this because $G$ loses all of $a_l$ whenever $0^n \not\in \LA$.
The ``shares'' $a_l$ and $b_i$ are dynamic and may have their values updated as 
we bet with them. Therefore, the current capital of $G$ after each bet is
$\sum_{i=1}^{\infty}(a_i+b_i)$. Though we describe separately how $G$ bets with 
$a_l$ and $b_i$, we may bet with both simultaneously. We bet with some $a_l$ for 
every stage, but with the $b_i$'s we bet only when the output of the simulated TM 
is $0$. Therefore every time we bet with $b_i$ we also simultaneously bet with 
$a_l$. First let us see how $G$ bets in stage $j$ using the $a_l$ and then with 
$b_i$.
  
\textbf{Betting with ${a_l}$:} Our choice of $l$ ensures that
$a_l \neq 0$. In fact, $a_l$ will either increase, or reduce to $0$
after betting. If we lose $a_l$ in the current stage, then we use
$a_{l+1} = 2^{-(l+1)}$ to bet in the next stage. $G$ uses $a_l$ to bet that at 
least one of the $2^{n_j}$ tribes that follow $0^{n_j}$ is completely contained 
in $A$; i.e. $0^{n_j}\in \LA$. Call this event $\mathcal{B}_{n_j}$. It is easy 
to see that for sufficiently large $n_j$, when strings are included 
independently in $A$ with probability $1/2$, the probability of event
$\mathcal{B}_{n_j}$ is 
$$\Pr(\mathcal{B}_{n_j}) = 1-(1-2^{-n_j})^{2^{n_j}} \approx 1-1/e.$$ $G$ bets in 
such a way that whenever the sequence of strings seen satisfies the event
$\mathcal{B}_{n_j}$, $a_l$ increases by a factor of approximately $1/(1-1/e)$. 
If the sequence of strings does not satisfy event $\mathcal{B}_{n_j}$ then $G$ 
loses all of $a_l$ and will bet with $a_{l+1}$ in the next stage.
   
We now elaborate on how $a_l$ increases by a factor of approximately
$1/(1-1/e)$ when event $\mathcal{B}_{n_j}$ occurs. Let
$\omega \in \{0,1,\star\}^{n_j2^{n_j}}$ represent the current status of 
strings in $[0^{n_j}+1,0^{n_j}+n_j2^{n_j}]$, $\omega[i]$ indicates the 
status of string $0^{n_j}+i$, $\star$ indicates the string has not been queried 
by $G$ yet, and bits $0$ and $1$ have their usual meaning. Define
\[G_{a_l}(\omega) = \frac{a_l}{\textrm{Pr}(\mathcal{B}_{n_j})}\textrm{Pr}(\mathcal{B}_{n_j}|\omega),\]
where $\textrm{Pr}(\mathcal{B}_{n_j})$ is the probability a random language
satisfies event $\mathcal{B}_{n_j}$, and
$\textrm{Pr}(\mathcal{B}_{n_j}|\omega)$ 
is the conditional probability of the event $\mathcal{B}_{n_j}$ given the 
current status of the strings as encoded by $\omega$, i.e. given the strings in 
$[0^{n_j}+1, 0^{n_j}+n_j2^{n_j}]$ whose membership in $A$ has already been 
revealed, what is the probability that randomly assigning membership to other 
strings causes event $\mathcal{B}_{n_j}$ to occur. This probability is rational 
and easy to compute in $O(2^{2n})$ time by examining the status of the strings 
in each of the $2^{n}$ tribes in $[0^{n}+1, 0^{n}+n2^{n}]$. $G_{a_l}$ is 
essentially a martingale. Whenever the membership of any string in
$[0^{n_j}+1,0^{n_j}+n_j2^{n_j}]$ is revealed, $a_l$ is then updated to
$G_{a_l}(\omega)$. Given $\omega \in \{0,1,\star\} ^{n_j2^{n_j}}$ and
$b \in \{0,1,\star\}$, let $\omega^{i\rightarrow b}$ denote $\omega$ with its 
$i^\mathrm{th}$ symbol set to $b$. It is easy to see that
\[G_{a_l}(\omega^{i\rightarrow\star})=\frac{G_{a_l}(\omega^{i\rightarrow0})+G_{a_l}(\omega^{i\rightarrow1})}{2}.\]

For all sufficiently 
large $n_j$, $$G_{a_l}(\omega) = \frac{a_l}{\Pr(B_{n_j})} \approx a_l/(1-1/e)$$ for any string
$\omega \in \binary^{n_j2^{n_j}}$ that satisfies event $\mathcal{B}_{n_j}$ 
and $0$ for those that do not satisfy $\mathcal{B}_{n_j}$. It is important to 
note that $G$ can always bet with $a_l$ no matter the order in which it 
requests the strings in $[0^{n_j}+1,0^{n_j}+n_j2^{n_j}]$ that it bets on. But 
as will be shown next the ordering of these strings is important when betting 
with $b_i$.

\textbf{Betting with ${b_i}$:}
Finally, we specify how $G$ bets with ``share'' $b_i$ which is reserved for 
betting with $M_i$. $G$ only bets with $b_i$ when the simulation of $M_i$ on 
$0^{n_j}$ returns $0$. In this situation $G$ bets that at least $2^{n_j}-
(n_j^{\lg i}+i)$ tribes of the $2^{n_j}$ tribes that follow
$0^{n_j}$ are not contained in $A$. For simplicity, $G$ does not bet on the 
tribes that $M_i$ queried. We denote by $\mathcal{C}_{n_j}$ the event that all 
the tribes not queried by $M_i$ are not contained in $A$. Event
$\mathcal{C}_{n_j}$ occurs with probability at most
$(1-2^{-n_j})^{2^{n_j}-(n_j^{\lg i}+i)}\approx 1/e$, and is almost the 
complement of $\mathcal{B}_{n_j}$. In this case $G$ bets similarly to how it 
bets with $a_l$ and increases $b_i$ by a factor of
$1/\Pr(\mathcal{C}_{n_j}) \approx e$ whenever 
the sequence of strings that follow $0^{n_j}$ satisfies $\mathcal{C}_{n_j}$. If 
the sequence does not satisfy $\mathcal{C}_{n_j}$ then $G$ loses all of $b_i$.
  
We now argue that $G$ succeeds on $X$. Suppose $A \in X$ and $S \subseteq 
0^*$ is the set of input strings $G$ simulates on some TMs in stages 
$1, 2, \ldots$. Then there are two possibilities:
  
\begin{enumerate}
\item Finitely many strings in $S$ do not belong to $\LA$,
\item Infinitely many strings in $S$ do not belong to $\LA$.
\end{enumerate}
  
Denote by $s_k$ the $k^\textrm{th}$ string in $S$. In the first case, there 
must be a $k$ such that for every stage $j \geq k$, $s_j \in \LA$. Once 
we reach stage $k$, $G$ uses a ``share'' of its capital $a_i \neq 0$ to bet 
on $s_j$ belonging to $\LA$ for all $j \geq k$. Therefore $G$ will increase 
$a_i$ by a factor of approximately $1/(1-1/e)$ for all but finitely many 
stages $j \geq k$. Therefore the capital of $G$ will grow without bound in 
this case.
  
In the second case, we must reach some stage $k$ at which we use the correct
oracle TM $M_i$ that decides $\LA$ on inputs in $S$. From this stage onward 
$G$ will never change the TM it simulates on the strings in $S$ we
have not seen yet. In this case we are guaranteed this simulation will output 
$0$ infinitely often. It follows by the correctness of $M_i$ and the definition 
of $G$ that whenever the output of $M_i$ is $0$ the ``share'' of the capital
$b_i$ reserved for betting on $M_i$ will be increased by a factor of 
approximately $e$. Since this condition is met infinitely often, it follows 
that the capital of $G$ increases without bound in this case also.

Finally we show that $G$ can be implemented as a $O(2^{2n})$-betting game; i.e. 
after $O(2^{2n})$ time, $G$ will have queried all strings of length $n$. 
First, we bound the runtime of each round of the betting game; i.e. the time 
required to bet on a string. This should not be confused with the stages of $G$ 
which include several rounds of querying. In each round, we have to compute
$\sum_{i=1}^{\infty}(a_i+b_i)$ the current capital of $G$. This sum can
easily be computed in $O(2^n)$ time. This is because for each round we change
at most two ``shares'' $a_l$ and $b_i$ to some rational numbers that can be 
computed in $O(2^n)$ time. The remaining $a$ and $b$ ``shares'' with 
indices less than $i$ and $l$ respectively have values $0$ and those with 
indices grater than $i$ and $l$ respectively retain their initial values. We 
may also simulate a TM in each round. Since each simulated TM $M_i$ 
has $i \leq n$ it takes $O(n^{\lg n})$ time for the simulation of $M_i$ on
$0^n$. Therefore, each round is completed in $O(2^n)$ time. After the 
simulation $G$ requests all the remaining strings 
in $[0^n+1,0^n+n2^n]$ that were not queried during the simulation. Therefore, 
it takes $O(2^{2n})$ time for $G$ to have requested all strings of length $n$. 
\end{proof}

\section{Limitations}\label{sec:limitations}

In this section we examine the possibility of extending
Theorem \ref{th:betting}. We show that it cannot be improved to
$\p$-random oracles or improved to separate the polynomial-time hierarchy
without separating $\BPP$ or $\NP$ from $\EXP$, respectively. On the
other hand, showing that Theorem \ref{th:betting} cannot be improved
to $\p$-random oracles would separate $\PSPACE$ from $\P$.

\subsection{Does $\P^A \neq \NP^A$ for every $\p$-random oracle $A$?}

We showed in Theorem \ref{th:betting} that $\P^A \neq \NP^A$ for a
$\p$-betting-game random oracle. It is unknown whether $\p$-betting
games and $\p$-martingales are equivalent. If they are, then $\BPP
\neq \EXP$ \cite{BvMRSS01}. This is based on the following theorem and
the result that $\leqpT$-complete languages for $\EXP$ have
$\p$-betting-game measure 0 \cite{BvMRSS01}.

 \begin{theorem}[Buhrman et al. \cite{BvMRSS01}]
 \label{BppExpThm}
  If the class of $\leqpT$-complete languages for $\EXP$ has
  $\ptwo$-measure zero then $\BPP \neq \EXP$.
 \end{theorem}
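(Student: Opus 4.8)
The plan is to prove the contrapositive: assuming $\BPP = \EXP$, I would show that the class of $\leqpT$-complete languages for $\EXP$ does \emph{not} have $\ptwo$-measure $0$. The one nontrivial external fact I would rely on is Lutz's measure conservation theorem, namely that $\EXP$ itself does not have $\ptwo$-measure $0$: no single $\ptwo$-martingale succeeds on every language in $\EXP$. So it suffices to argue that, under $\BPP = \EXP$, any $\ptwo$-martingale succeeding on all $\leqpT$-complete sets could be converted into one succeeding on all of $\EXP$, contradicting conservation. Informally, I want to show that completeness becomes measure-dense in $\EXP$ once $\BPP = \EXP$.

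To effect the transfer I would use the hypothesis $\BPP = \EXP$ through a standard amplified BPP predicate. Fix a canonical $\leqpT$-complete language $K$ for $\EXP$; since $K \in \BPP$, amplification gives a polynomial-time predicate $R(x,r)$ with $|r| = p(|x|)$ and error below $2^{-2|x|}$. The point of having a \emph{randomized} reduction is that the random bits $r$ can be supplied by the oracle language itself: reading a block of $A$'s characteristic bits indexed by $x$, computing $R(x,\cdot)$ in polynomial time, and taking a majority recovers $K(x)$ correctly for every $x$ as long as those bits are ``random enough,'' so that $K \leqpT A$ holds on a non-null set of targets $A$. Combining this embedding with an arbitrary $B \in \EXP$ placed on a disjoint set of positions should produce a language that is simultaneously in $\EXP$ and $\leqpT$-hard for $\EXP$, hence complete, and on which I can route the behavior of $B$ so that a martingale defeated by $B$ is also defeated by the combined set.

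The step I expect to be the main obstacle is precisely reconciling the three competing requirements on the constructed set: it must (i) stay inside $\EXP$, (ii) remain $\leqpT$-complete, and (iii) avoid capture by the given $\ptwo$-martingale. Requirement (i) forbids the naive move of feeding in a genuinely $\ptwo$-random source, since such a set is not in $\EXP$ and would only yield $\EXP$-hardness, not completeness; this is where the randomized reduction from $\BPP = \EXP$ must do real work, letting the hardness-encoding part be generated from pseudorandom, martingale-unpredictable bits rather than truly random ones. Making the embedding measure-preserving --- so that the failure of conservation for $\EXP$ genuinely descends to the complete degree, rather than being washed out by the extra structure forced on the complete set --- is the delicate point, and I would expect to need the canonical complete set $K$ to be random-self-reducible (via the standard instance-checker/arithmetization for $\EXP$) to carry it through cleanly.
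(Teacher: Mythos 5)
The paper itself gives no proof of this theorem---it is imported from Buhrman et al.~\cite{BvMRSS01}---so the comparison below is with the standard argument from that reference. Your opening moves are the right ones: pass to the contrapositive, assume $\BPP = \EXP$, invoke measure conservation ($\EXP$ does not have $\ptwo$-measure $0$), and use the hypothesis to extract hardness from oracle bits via an amplified $\BPP$-predicate for a canonical complete set $K$ with majority voting over a block of $A$'s characteristic bits. What is missing is the step that closes the argument, and it is much simpler than the embedding you attempt. The majority-voting/Chernoff estimate, combined with a resource-bounded first Borel--Cantelli lemma, yields not merely that $K \leqpT A$ for a non-null set of $A$, but that the class $H^c = \{A \mid \EXP \not\subseteq \P^A\}$ of \emph{non-hard} oracles has $\ptwo$-measure $0$ (the covering martingale must compute $K(x)\in\EXP$ while betting on strings of length $\mathrm{poly}(|x|)$, which the $\ptwo$ time bound permits---this is why the statement is about $\ptwo$- rather than $\p$-measure). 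Then one observes that $\EXP \subseteq X \cup H^c$, where $X$ is the class of $\leqpT$-complete sets: every $A \in \EXP$ is either hard, hence complete, or lies in $H^c$. If $X$ also had $\ptwo$-measure $0$, the sum of the two martingales would succeed on all of $\EXP$, contradicting measure conservation. No individual complete set ever needs to be constructed.

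The constructive route you sketch instead---interleaving a pseudorandom encoding of $K$ with an arbitrary $B \in \EXP$ and pulling the given martingale back to $B$---contains a genuine gap that you partly sense but misdiagnose. Even granting that the combined set $C$ can be kept in $\EXP$ and complete (which is doable, since a fooling block for each $x$ can be found by brute force within $\EXP$), the pulled-back function $d'(w) = d(\text{prefix of } \chi_C \text{ determined by } w)$ is \emph{not} a martingale: on the positions carrying the fixed, $\EXP$-computable encoding of $K$ the bits are deterministic, so the wagers $d$ places there break the averaging condition $d'(w) = (d'(w0)+d'(w1))/2$. Repairing this requires a martingale-dilation argument, not random-self-reducibility or instance checkers, which play no role here. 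The union argument above sidesteps the difficulty entirely.
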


We show improving Theorem \ref{th:betting} to
$\p$-random oracles would also imply $\BPP \neq \EXP$. First, we prove
the following for $\ptwo$-measure.
 
\begin{theorem}\label{th:P-NP-measure-limitation}
 If $\{A \mid \P^A \neq \NP^A\}$ has $\ptwo$-measure 1, then $\BPP \neq \EXP$.
 \end{theorem}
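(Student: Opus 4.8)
The plan is to reduce the statement to Theorem \ref{BppExpThm}, which already converts a $\ptwo$-measure $0$ hypothesis about the $\leqpT$-complete languages for $\EXP$ into the separation $\BPP \neq \EXP$. So it suffices to connect the hypothesis $\{A \mid \P^A \neq \NP^A\}$ has $\ptwo$-measure $1$ (equivalently, $\{A \mid \P^A = \NP^A\}$ has $\ptwo$-measure $0$) to a $\ptwo$-measure $0$ statement about $\EXP$-complete oracles. The central observation I would use is that every $\leqpT$-complete language $A$ for $\EXP$ forces the collapse $\P^A = \NP^A = \EXP$; consequently the class of $\leqpT$-complete languages for $\EXP$ is contained in $\{A \mid \P^A = \NP^A\}$.

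To verify this collapse, first suppose $A$ is $\leqpT$-complete for $\EXP$. Hardness gives $\EXP \subseteq \P^A$, since every $\EXP$ language is $\leqpT$-reducible to $A$ and hence lies in $\P^A$. For the reverse inclusions one uses $A \in \EXP$: a polynomial-time oracle machine makes polynomially many queries, each answerable in exponential time, so $\P^A \subseteq \EXP$, giving $\P^A = \EXP$. For the nondeterministic side, $\EXP = \P^A \subseteq \NP^A$ is immediate, and for $\NP^A \subseteq \EXP$ I would deterministically simulate the $\NP^A$ machine by exhausting all $2^{\poly(n)}$ computation paths, answering each of the polynomially many oracle queries along a path in exponential time; the total running time is $2^{\poly(n)}$, so $\NP^A \subseteq \EXP$. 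Combining, $\P^A = \NP^A = \EXP$, as claimed.

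With the inclusion in hand, the measure-theoretic glue is routine: if $\{A \mid \P^A = \NP^A\}$ has $\ptwo$-measure $0$, then there is a $\ptwo$-computable martingale succeeding on every language in it, and this same martingale succeeds on every language in any subset, in particular on every $\leqpT$-complete language for $\EXP$. Hence the class of $\leqpT$-complete languages for $\EXP$ has $\ptwo$-measure $0$, and Theorem \ref{BppExpThm} yields $\BPP \neq \EXP$. The step requiring the most care is the collapse $\NP^A \subseteq \EXP$, where one must confirm that the brute-force traversal of exponentially many nondeterministic paths, each interleaved with exponential-time oracle answers for $A \in \EXP$, still fits within $\EXP$; everything else is a direct application of monotonicity of measure and the cited theorem.
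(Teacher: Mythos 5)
Your proposal is correct and follows essentially the same route as the paper: both arguments rest on the chain $\NP^A \subseteq \EXP \subseteq \P^A \subseteq \NP^A$ for any $\leqpT$-complete $A$ for $\EXP$, conclude that the complete languages lie inside $\{A \mid \P^A = \NP^A\}$, and then invoke monotonicity of $\ptwo$-measure together with Theorem \ref{BppExpThm}. You simply spell out the brute-force simulations behind those inclusions in more detail than the paper does.
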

 \begin{proof}
  If $L$ is any $\leqpT$-complete language for $\EXP$, then 
  \[\NP^L \subseteq \EXP \subseteq \P^L \subseteq \NP^L.\] Therefore
  the class of $\leqpT$-complete languages for $\EXP$ is a subset of $\{A \mid \P^A
  = \NP^A\}$. If $\{A \mid \P^A = \NP^A\}$ has $\ptwo$-measure 0 then so
      does the class of $\leqpT$-complete languages of $\EXP$. Theorem
      \ref{BppExpThm} implies that $\BPP \neq \EXP$.
 \end{proof}
 
 \noindent
 We have the following for $\p$-random oracles by the
 universality of $\ptwo$-measure for $\p$-measure \cite{Lutz:AEHNC}.

\begin{corollary}\label{co:P-NP-random-limitation} If $\P^A \neq \NP^A$ for every $\p$-random oracle
  $A$, then $\BPP \neq \EXP$.
\end{corollary}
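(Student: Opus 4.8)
The plan is to deduce the corollary from Theorem \ref{th:P-NP-measure-limitation} by showing that its hypothesis forces $\{A \mid \P^A \neq \NP^A\}$ to have $\ptwo$-measure $1$. The bridge is the universality of $\ptwo$-measure for $\p$-measure from \cite{Lutz:AEHNC}: there is a single $\ptwo$-computable martingale $d$ whose success set contains the success set of every $\p$-computable martingale. Consequently the class of non-$\p$-random languages, being the union of the success sets of all $\p$-martingales, is contained in $S^\infty[d]$ and therefore has $\ptwo$-measure $0$. Equivalently, the class $\RANDp$ of $\p$-random languages has $\ptwo$-measure $1$.

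First I would record the set inclusion coming from the hypothesis. If $\P^A \neq \NP^A$ for every $\p$-random oracle $A$, then $\RANDp \subseteq \{A \mid \P^A \neq \NP^A\}$. Since $\RANDp$ has $\ptwo$-measure $1$ and $\ptwo$-measure-$1$ classes are closed upward under inclusion (if $\RANDp \subseteq Y$ then $Y^c \subseteq \RANDp^c$, so any $\ptwo$-martingale succeeding on $\RANDp^c$ also succeeds on $Y^c$), it follows that $\{A \mid \P^A \neq \NP^A\}$ has $\ptwo$-measure $1$. Then I would simply invoke Theorem \ref{th:P-NP-measure-limitation}, whose hypothesis is now met, to conclude $\BPP \neq \EXP$.

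The content here is light; the one point that needs care — and where the citation is doing the real work — is the universality step. A $\p$-martingale runs in polynomial time, whereas a $\ptwo$-martingale has quasipolynomial time, which is exactly enough room to form a suitably weighted and truncated sum $\sum_i 2^{-i} d_i$ over an effective enumeration $d_1, d_2, \ldots$ of $\p$-martingales and to evaluate it to adequate precision while preserving the averaging condition. I would not reprove this; I would cite \cite{Lutz:AEHNC} and use it as a black box. The only subtlety worth double-checking is that ``$\p$-random'' is defined by quantifying over $\p$-martingales (not over $\ptwo$-martingales), so that the universal $\ptwo$-martingale genuinely covers every non-$\p$-random language; this matches the definitions given in Section \ref{sec:prelim}.
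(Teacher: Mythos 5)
Your proposal is correct and follows essentially the same route as the paper: both arguments use the universal $\ptwo$-martingale from \cite{Lutz:AEHNC} to conclude that $\{A \mid \P^A = \NP^A\}$ has $\ptwo$-measure $0$ (equivalently, that its complement has $\ptwo$-measure $1$) and then invoke Theorem \ref{th:P-NP-measure-limitation}. Your extra remarks on upward closure of measure-$1$ classes and on why the universality step works are accurate but just make explicit what the paper leaves implicit.
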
 
\begin{proof}
The hypothesis implies that every $A$ with $\P^A = \NP^A$ is not
$\p$-random, i.e. there is a $\p$-martingale that succeeds on
$A$. Let $d'$ be a $\ptwo$-martingale $d'$ that is universal for all
$\p$-martingales \cite{Lutz:AEHNC}: $S^\infty[d] \subseteq
S^\infty[d']$ for every $\p$-martingale $d$. Then $d'$ succeeds
on $\{A \mid \P^A = \NP^A\}$.
\end{proof}

\subsection{Is it possible that $\P^A = \NP^A$ for some $\p$-random
  oracle $A$?}

Given Theorem \ref{th:P-NP-measure-limitation}, we consider the
possibility of whether $\oraclesPNP$ does not have $\p$-measure
0. Because Lutz and Schmidt \cite{Lutz:CSRPO} showed that this class
has $\pspace$-measure 0, it turns out that if it does not have
$\p$-measure 0, then we have a separation of $\PSPACE$ from $\P$.

\begin{theorem}[Lutz and Schmidt \cite{Lutz:CSRPO}]\label{th:pspace-P-NP}
The class $\oraclesPNP$ has $\pspace$-measure 0.
\end{theorem}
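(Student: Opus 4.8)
The plan is to establish that the class $\oraclesPNP = \{A \mid \P^A = \NP^A\}$ has $\pspace$-measure $0$ by constructing a single $\pspace$-computable martingale that succeeds on it. Following the strategy already used in Theorem \ref{th:betting}, I would reduce to the test language: since $\P^A = \NP^A$ implies $\LA \in \P^A$ (because $\LA \in \NP^A$ always), it suffices to show that $\{A \mid \LA \in \P^A\}$ has $\pspace$-measure $0$. The key advantage of space-bounded over time-bounded computation is that a $\pspace$-martingale betting on the characteristic bits of the strings in the tribe block $[0^n+1, 0^n+n2^n]$ can afford to examine \emph{all} of the exponentially many strings queried by a candidate $\P^A$ machine, since it has $2^{O(n)}$ time and polynomial-in-$n$ (hence polylogarithmic-in-$2^n$) space available relative to the block length.

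The main construction I would carry out is as follows. I would fix an enumeration $M_1, M_2, \dots$ of polynomial-time oracle machines and, for each $i$, build a component martingale $d_i$ that bets only on blocks $[0^n+1, 0^n+n2^n]$ for $n$ in a sparse set of lengths assigned to $M_i$, spacing the lengths far enough apart that the blocks are disjoint and the betting on different lengths is independent. On its assigned lengths, $d_i$ simulates $M_i^A$ on input $0^n$ to predict whether $0^n \in \LA$, then places a bet on the tribe bits that matches this prediction: if $M_i$ accepts, $d_i$ bets (via the conditional-probability martingale $G_{a}$ of Theorem \ref{th:betting}) that some tribe is fully contained in $A$, and if $M_i$ rejects it bets that no tribe is contained in $A$. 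When $M_i$ correctly decides $\LA$ on input $0^n$, the bet wins and the capital multiplies by a factor bounded away from $1$; summing over infinitely many assigned lengths drives the capital to infinity. The overall martingale is $d = \sum_i 2^{-i} d_i$, and when $A \in \oraclesPNP$ some $M_i$ decides $\LA$ correctly, so $d_i$ and hence $d$ succeeds on $A$. The crucial point is that simulating $M_i^A$ and computing $\Pr(\mathcal{B}_n \mid \omega)$ are both feasible in space polynomial in $n$, i.e. $\pspace$ relative to the block.

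The hard part, and the reason a martingale rather than a betting game is delicate here, is the \textbf{ordering constraint}: a martingale must bet on strings in standard lexicographic order, whereas the queries of $M_i^A$ on input $0^n$ may be adaptive and scattered anywhere, including at strings \emph{outside} the current tribe block or at strings the martingale has not yet reached in lexicographic order. In the betting-game proof this was finessed by nonmonotone querying; for a martingale I would instead exploit that the relevant information is confined to the block $[0^n+1, 0^n+n2^n]$ together with strings of length $O(n)$, all of which have already been revealed by the time the martingale reaches the block, provided I choose the assigned lengths and bet only after the entire block and all shorter relevant strings have been scanned. Concretely, I would have $d_i$ read through the block lexicographically accumulating no net bet until it has the full block contents, then bet its reserved capital on the final tribe bit of the block in the direction dictated by the already-determined value of $\Pr(\mathcal{B}_n \mid \omega)$ versus $M_i$'s output — this is legitimate because the entire simulation of $M_i^A$ depends only on characteristic bits the martingale has passed, so the prediction is a function of the revealed prefix. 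Verifying that the adaptive queries of $M_i$ really do stay within the revealed prefix (or handling those that do not by reserving capital appropriately) is the technical crux, and it is exactly here that the polynomial \emph{space} bound does the work that a polynomial time bound cannot.
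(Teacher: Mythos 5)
There is a genuine gap, and it sits exactly at the point you yourself flag as ``the technical crux.'' Your construction rests on the claim that the simulation of $M_i^A$ on $0^n$ ``depends only on characteristic bits the martingale has passed'' by the time it reaches the block $[0^n+1,0^n+n2^n]$. This is false: $M_i$ runs in time $n^{\lg i}+i$ and may adaptively query strings of that length, which lie lexicographically far beyond the block (whose strings all have length at most $n+\lg n+O(1)$). So the output of $M_i^A(0^n)$ is not a function of the prefix revealed when the martingale must place its bets on the tribe bits, and ``handling those that do not by reserving capital appropriately'' is not an argument. Separately, the betting scheme you describe cannot win money even if the prediction were available: if $d_i$ accumulates no net bet until it has the full block contents and then bets only on the final tribe bit, the Tribes event is already determined by the first $n2^n-1$ bits except with probability about $2^{-(n-1)}$ (the only case in which the last bit decides $\mathcal{B}_{n}$), so the expected multiplicative gain per block is $1+2^{-\Omega(n)}$ and the capital stays bounded. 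To profit from a correct prediction of $\mathcal{B}_{n}$, the capital must be staked across the block bits \emph{before} they are revealed, which is precisely what requires knowing $M_i$'s answer first.

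The missing idea is conditional averaging over the unrevealed queries: when forced to bet on a block bit in lexicographic order, the $\pspace$-martingale computes the expected capital of the nonmonotone strategy over all possible settings of the longer strings that $M_i$ might query. That expectation ranges over exponentially many completions, which is exactly why the result is known at the $\pspace$ level but open at the $\p$ level. Note also that the paper does not reprove this theorem at all: it cites Lutz and Schmidt \cite{Lutz:CSRPO} and observes that the statement also follows as a corollary of Theorem \ref{th:betting}, because every $\p$-betting game can be simulated by a $\pspace$-martingale \cite{BvMRSS01} --- and that simulation is precisely the averaging argument your proposal needs but does not supply. (A minor point in the same direction: a $\pspace$-martingale has workspace polynomial in $|w|=2^{O(n)}$, not polynomial in $n$; you are entitled to far more room than you claim, and the averaging step needs it.)
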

\noindent
We note that because every $\p$-betting game may be simulated by a
$\pspace$-martingale \cite{BvMRSS01}, Theorem \ref{th:pspace-P-NP}
follows as a corollary to Theorem \ref{th:betting}.
 \begin{lemma}
 \label{PspaceLemma}
  If $\P = \PSPACE$, then for every $\pspace$-martingale $d$,
  there is a $\p$-martingale $d'$ with $S^\infty[d] \subseteq
  S^\infty[d']$.
 \end{lemma}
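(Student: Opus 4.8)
The plan is to prove that if $\P = \PSPACE$, then any $\pspace$-martingale $d$ can be simulated by a $\p$-martingale $d'$ with the same (or larger) success set. The key observation is that a $\pspace$-martingale takes values in $[0,\infty)$, which are real numbers, so even under $\P = \PSPACE$ I cannot directly compute $d(w)$ in polynomial time --- the exact value may require many bits. The standard way around this, following the resource-bounded measure literature, is to work with a rational-valued approximation with controlled precision and to build a savings-account martingale that banks winnings so that small approximation errors cannot destroy success.

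First I would observe that under $\P = \PSPACE$, the bit-graph of $d$ becomes polynomial-time decidable: the predicate ``the $k$-th bit of (a suitable dyadic rational approximation to) $d(w)$ is $1$'' is computable in space polynomial in $|w| + k$, hence under the hypothesis in time polynomial in $|w|+k$. This lets me define, for each string $w$, a rational approximation $\tilde{d}(w)$ of $d(w)$ accurate to within $2^{-(|w|+c)}$ for a suitable constant, all in time polynomial in $|w|$. The first technical point is to make sure these approximations are themselves consistent with the exact averaging condition of a martingale, since rounding will generally break the identity $\tilde{d}(w) = (\tilde{d}(w0)+\tilde{d}(w1))/2$.

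The main obstacle is exactly this: repairing the averaging condition while preserving success. The standard fix is to define $d'$ as an \emph{exact} martingale built from the approximations, for instance by letting $d'(w) = \tilde{d}(w) + E(w)$ where $E(w)$ is a small error-absorbing term defined recursively so that the averaging law holds exactly and $E$ stays bounded. Concretely, I would set $d'(\lambda)$ slightly above $\tilde{d}(\lambda)$ and then, at each extension, define the bets of $d'$ to match those implied by $\tilde d$ as closely as the exact conditional-expectation constraint allows, pushing the accumulated rounding discrepancy into a nonnegative reserve that telescopes because the per-step errors decay geometrically in $|w|$. Since $\sum_n 2^{-(n+c)}$ converges, the total discrepancy between $d'(w)$ and $d(w)$ along any sequence stays within a fixed multiplicative constant, so whenever $d$ is unbounded along $\chi_A$, so is $d'$, giving $S^\infty[d] \subseteq S^\infty[d']$.

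Finally I would verify the resource bound: computing $\tilde d(w0)$ and $\tilde d(w1)$ and updating the reserve each take time polynomial in $|w|$ under $\P = \PSPACE$, and the recursion for $E(w)$ only refers to the current prefix, so $d'$ is genuinely $\p$-computable. The delicate bookkeeping is ensuring $d'$ is nonnegative and satisfies the averaging identity \emph{exactly} despite using only finite-precision values of $d$; I expect this reserve construction to be the crux, with the complexity and success-preservation parts being routine once the precision schedule is fixed.
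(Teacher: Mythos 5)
Your proposal is correct and takes essentially the same route as the paper: both arguments rest on the observation that the bit-graph of (an approximation to) $d$ is decidable in $\PSPACE$ and hence, under the hypothesis $\P=\PSPACE$, in $\P$. The only difference is that the paper short-circuits your approximation-and-repair step by invoking the standard fact \cite{Lutz:WCE} that a $\pspace$-martingale may be assumed without loss of generality to be \emph{exactly} computable with outputs of polynomial length, so that $d$ itself becomes $\p$-computable and no error-absorbing reserve is needed; the reserve construction you sketch is precisely the content of that cited lemma, carried out inline.
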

 \begin{proof}
  Let $d:\binarystar \longrightarrow [0,\infty)$ be a
    $\pspace$-martingale.  Without loss of generality also assume that
    $d$ is exactly computable \cite{Lutz:WCE} and its output is in
    $\binary^{\leq p(n)},$ for some polynomial $p$. Consider the
    language $L_d = \{\pair{w,i,b}\ |\textrm{ the $i$th bit of $d(w)$
      is $b$}\}$.  Clearly $L_d \in \PSPACE$ and hence also in
    $\P$ by our hypothesis.  We can therefore compute $d(w)$ is
    polynomial time using $L_d$.
 \end{proof}
 
 \begin{theorem}
  If %
$\oraclesPNP$ does not have $\p$-measure 0, then $\P \neq \PSPACE$.
 \end{theorem}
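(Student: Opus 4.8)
The plan is to argue by contraposition: I would assume $\P = \PSPACE$ and deduce that $\oraclesPNP$ has $\p$-measure $0$, which is exactly the negation of the conclusion under the negation of the hypothesis. The two ingredients I need are already in hand. First, Theorem \ref{th:pspace-P-NP} (Lutz and Schmidt) tells me that $\oraclesPNP$ has $\pspace$-measure $0$, so there is a $\pspace$-martingale $d$ that succeeds on every language in $\oraclesPNP$; that is, $\oraclesPNP \subseteq S^\infty[d]$. Second, Lemma \ref{PspaceLemma} is precisely the tool that lowers the resource bound on $d$: under the hypothesis $\P = \PSPACE$, it hands me a $\p$-martingale $d'$ whose success set contains that of $d$, i.e. $S^\infty[d] \subseteq S^\infty[d']$.

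Chaining these two containments gives $\oraclesPNP \subseteq S^\infty[d] \subseteq S^\infty[d']$, so the $\p$-martingale $d'$ succeeds on every oracle $A$ with $\P^A = \NP^A$. By the definition of resource-bounded measure this witnesses that $\oraclesPNP$ has $\p$-measure $0$. Taking the contrapositive of ``$\P = \PSPACE$ implies $\oraclesPNP$ has $\p$-measure $0$'' yields exactly the stated theorem: if $\oraclesPNP$ does not have $\p$-measure $0$, then $\P \neq \PSPACE$.

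The only place that demands any care is the simulation inside Lemma \ref{PspaceLemma}, where one must confirm that the $\p$-martingale $d'$ reconstructed from the $\PSPACE$-language $L_d$ really is a martingale (the averaging condition is inherited verbatim from $d$) and that reading off $d(w)$ bit-by-bit via the $\P$-decision procedure for $L_d$ stays within polynomial time, using that $d$ is exactly computable with output length bounded by a fixed polynomial $p$. Since that lemma is proved separately, I expect no genuine obstacle here beyond orienting the contrapositive correctly and checking that the success-set inclusions compose in the direction claimed; the present theorem then follows immediately.
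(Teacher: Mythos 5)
Your proposal is correct and follows exactly the paper's argument: contrapose, assume $\P = \PSPACE$, combine Theorem \ref{th:pspace-P-NP} with Lemma \ref{PspaceLemma} to convert the $\pspace$-martingale witnessing $\pspace$-measure $0$ into a $\p$-martingale succeeding on $\oraclesPNP$. Nothing further is needed.
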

 \begin{proof}
  Assume $\P = \PSPACE.$ Theorem \ref{th:pspace-P-NP} and Lemma
  \ref{PspaceLemma} imply that $\oraclesPNP$ has $\p$-measure 0.
 \end{proof}

\begin{corollary}\label{co:P-NP-random-limitation-2} If there is a $\p$-random oracle $A$ such that $\P^A
  = \NP^A$, then $\P \neq \PSPACE$.
\end{corollary}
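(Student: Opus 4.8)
The plan is to obtain this corollary directly from the preceding theorem (that $\oraclesPNP$ not having $\p$-measure $0$ implies $\P \neq \PSPACE$), by translating the hypothesis about an \emph{individual} $\p$-random oracle into a statement about the $\p$-measure of the entire class $\oraclesPNP$. The bridge is the defining relationship between randomness of a point and measure of a set: $\oraclesPNP$ has $\p$-measure $0$ precisely when a single $\p$-martingale succeeds on every language $A$ with $\P^A = \NP^A$, whereas $A$ being $\p$-random means precisely that no $\p$-martingale succeeds on $A$.

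First I would assume the hypothesis, namely that there is a $\p$-random oracle $A$ with $\P^A = \NP^A$, so that $A \in \oraclesPNP$. Next I would argue by contradiction that $\oraclesPNP$ cannot have $\p$-measure $0$: if it did, there would be a single $\p$-martingale $d$ succeeding on every member of $\oraclesPNP$, and in particular $d$ would succeed on $A$, contradicting the $\p$-randomness of $A$. Hence $\oraclesPNP$ does not have $\p$-measure $0$, and applying the preceding theorem yields $\P \neq \PSPACE$.

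I do not expect a genuine obstacle at this step; the corollary is essentially a repackaging of the theorem it follows, and all the substantive work has already been carried out in Lemma \ref{PspaceLemma} (simulating a $\pspace$-martingale by a $\p$-martingale under the assumption $\P = \PSPACE$) together with the $\pspace$-measure bound of Theorem \ref{th:pspace-P-NP}. The one point that requires care is the direction of the contrapositive: one must use that measure-$0$ of a class is witnessed by a \emph{single} martingale which then succeeds on each of its points, and it is exactly this uniformity that rules out any point of the class being $\p$-random.
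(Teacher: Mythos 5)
Your proof is correct and is exactly the argument the paper intends (the corollary is stated without proof, as an immediate consequence of the preceding theorem): a $\p$-random $A$ in $\oraclesPNP$ rules out any single $\p$-martingale succeeding on the whole class, so the class does not have $\p$-measure $0$ and the theorem applies. Your remark about the uniformity of the measure-$0$ witness is the right point of care and matches the paper's definition of $\Delta$-measure $0$.
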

 
\subsection{Is $\PH$ infinite relative to $\p$-betting-game random oracles?}

Bennett and Gill \cite{BennettGill81} showed that $\NP^A \neq \coNP^A$
for a random oracle $A$, with probability 1. Thus $\PH^A$ does not
collapse to its first level. Rossman, Servedio, and Tan
\cite{RoSeTa15} showed that $\PH^A$ is infinite relative to a random
oracle, with probability 1.

Can we improve Theorem \ref{th:betting} to show that $\PH^A$ does not
collapse for a $\p$-betting-game random oracle? This also has
complexity class separation consequences:

 \begin{theorem}
 \label{th:SigmaKThm}
  For $k > 0$, let $X_k = \{A \mid \sum_k^{\P,A} =
  \prod_k^{\P,A}\}$. If
  $X_k$ has $\ptwo$-betting-game measure zero, then $\sum_k^\P \neq \EXP$.
 \end{theorem}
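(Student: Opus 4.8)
The plan is to prove the contrapositive: assuming $\Sigma_k^\P = \EXP$, I will show that $X_k$ cannot have $\ptwo$-betting-game measure zero. Everything reduces to one containment, $\EXP \subseteq X_k$, which is available here precisely because the collapse defining $X_k$ is symmetric between $\Sigma_k$ and $\Pi_k$; this symmetry is what lets me argue directly, rather than routing through Theorem \ref{BppExpThm} as in the $\P$ versus $\NP$ case (there $\P^A$ never reaches $\EXP$ even under $\NP = \EXP$, so the direct approach is unavailable).

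First I would establish the containment. Assume $\Sigma_k^\P = \EXP$; since $\EXP$ is closed under complement this also gives $\Pi_k^\P = \EXP$. The relativized fact I need is $\Sigma_k^{\P,\EXP} = \EXP$: a $\Sigma_k$ computation guesses $k$ blocks of polynomially many bits and then runs a polynomial-time machine with an $\EXP$ oracle, and all $2^{\mathrm{poly}}$ quantifier branches, each itself an $\EXP$ computation, can be evaluated inside $\EXP$. Hence for any $A \in \EXP$ we have $\Sigma_k^{\P,A} \subseteq \Sigma_k^{\P,\EXP} = \EXP$, while trivially $\Sigma_k^{\P,A} \supseteq \Sigma_k^\P = \EXP$, so $\Sigma_k^{\P,A} = \EXP$; the same reasoning gives $\Pi_k^{\P,A} = \EXP$. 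Thus $\Sigma_k^{\P,A} = \Pi_k^{\P,A}$, i.e.\ $A \in X_k$, and therefore $\EXP \subseteq X_k$.

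Next I would use that $\EXP$ does not have $\ptwo$-betting-game measure zero. If $X_k$ had $\ptwo$-betting-game measure zero through a betting game $G$, then by the containment $G$ would succeed on every language in $\EXP$. But any $\ptwo$-betting game runs in time $2^{n^{c}}$ for some fixed $c$, and a diagonal language defeats it: simulate $G$ and, whenever it bets on a fresh string, set that string's characteristic bit opposite to the bet sign so that $G$'s capital never increases. Because every length-$n$ query of $G$ is made within its first $2^{n^{c}}$ steps, this language is decidable in time $2^{n^{c+1}} \in \EXP$, and $G$ fails on it, contradicting that $G$ succeeds on all of $\EXP$. Hence $\Sigma_k^\P \neq \EXP$.

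The main obstacle is this final diagonalization, because the $\ptwo$ and $\EXP$ time scales coincide at $2^{n^{O(1)}}$; the argument only works because a single betting game has a fixed exponent $c$ whereas $\EXP = \bigcup_d \DTIME(2^{n^{d}})$ ranges over all exponents. Care is also needed since a betting game selects strings in a nonmonotone, adaptive order, so the construction must interleave the simulation of $G$ with the assignment of characteristic bits in exactly the order $G$ queries them, and I must verify this bookkeeping stays within the $\EXP$ bound. By comparison the containment $\EXP \subseteq X_k$, though it is the conceptual core, is routine once the relativized collapse $\Sigma_k^{\P,\EXP} = \EXP$ is in hand.
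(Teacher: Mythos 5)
Your proposal is correct and follows essentially the same route as the paper: prove the contrapositive by showing $\EXP \subseteq X_k$ via the chain $\Sigma_k^\P \subseteq \Sigma_k^{\P,A} \subseteq \EXP = \Pi_k^\P \subseteq \Pi_k^{\P,A} \subseteq \EXP = \Sigma_k^\P$ for $A \in \EXP$, and then invoke the fact that $\EXP$ does not have $\ptwo$-betting-game measure zero. The only difference is that you prove this last fact by an explicit diagonalization (correctly exploiting that a single betting game has a fixed exponent), whereas the paper simply cites Buhrman et al.\ for it.
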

 \begin{proof}
  We prove the contrapositive. Suppose $\sum_k^\P = \EXP$, then
  $\prod_k^\P = \EXP$. Given $A \in \EXP$, then the following containments
  hold:
$$ \Sigma_k^\P \subseteq \Sigma_k^{\P,A} \subseteq \EXP = \Pi_k^\P
   \subseteq \Pi_k^{\P,A} \subseteq \EXP = \Sigma_k^\P.$$
   turn implies that $\EXP \subseteq X_k$. Since $\EXP$ does not have
   $\ptwo$-betting-game measure zero \cite{BvMRSS01} then neither does
   $X_k$. Hence, the Theorem follows.
 \end{proof}

In particular, we have the following for the first level of PH:
\begin{corollary}\label{co:NP-coNP} If $\{ A \mid
\NP^A \neq \coNP^A \}$ has p-betting-game measure 1, then $\NP \neq
\EXP$.
\end{corollary}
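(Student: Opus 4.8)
The plan is to obtain Corollary~\ref{co:NP-coNP} as the special case $k=1$ of Theorem~\ref{th:SigmaKThm}, after reconciling the two resource bounds that appear in the two statements. First I would observe that at the first level of the hierarchy, $\Sigma_1^{\P,A} = \NP^A$ and $\Pi_1^{\P,A} = \coNP^A$ for every oracle $A$, so the class $X_1 = \{A \mid \Sigma_1^{\P,A} = \Pi_1^{\P,A}\}$ of Theorem~\ref{th:SigmaKThm} is precisely $\{A \mid \NP^A = \coNP^A\}$, the complement of the class named in the hypothesis.

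Next I would translate the hypothesis into the language of Theorem~\ref{th:SigmaKThm}. By the definitions in Section~\ref{sec:prelim}, the statement that $\{A \mid \NP^A \neq \coNP^A\}$ has $\p$-betting-game measure $1$ is exactly the statement that its complement $X_1$ has $\p$-betting-game measure $0$; that is, there is a $\p$-computable betting game $G$ that succeeds on every language in $X_1$.

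The one step requiring attention is the mismatch between the $\p$ bound in the hypothesis and the $\ptwo$ bound in Theorem~\ref{th:SigmaKThm}. Here I would invoke the inclusion of resource bounds $\p = \DTIMEF(n^{O(1)}) \subseteq \DTIMEF(2^{(\lg n)^{O(1)}}) = \ptwo$: any $\p$-computable betting game is a fortiori $\ptwo$-computable, so the same game $G$ witnesses that $X_1$ has $\ptwo$-betting-game measure $0$ as well. With this in hand, applying Theorem~\ref{th:SigmaKThm} at $k=1$ yields $\Sigma_1^\P \neq \EXP$, i.e. $\NP \neq \EXP$, which is the desired conclusion.

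I do not expect a genuine obstacle: all the mathematical content is carried by Theorem~\ref{th:SigmaKThm}, and the corollary is a routine specialization together with the observation that a smaller resource bound produces a smaller (hence still applicable) measure-$0$ class. The only point to state carefully is the direction of that last inclusion, namely that strengthening a measure hypothesis from $\ptwo$ to $\p$ (a stronger requirement on the betting game) still suffices to feed the theorem, since every $\p$-bounded game is automatically $\ptwo$-bounded.
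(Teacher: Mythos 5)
Your proposal is correct and matches the paper's (implicit) argument exactly: the corollary is stated as the $k=1$ instance of Theorem~\ref{th:SigmaKThm}, with the only extra observation being the one you make, namely that a $\p$-computable betting game is a fortiori $\ptwo$-computable, so $\p$-betting-game measure $0$ of $X_1$ implies the $\ptwo$-betting-game measure $0$ hypothesis of the theorem. You also correctly identify the direction of that inclusion, which is the only point where one could slip.
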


Because it is open whether betting games have a union lemma
\cite{BvMRSS01}, it is not clear whether Corollary \ref{co:NP-coNP}
may be extended to show that if $\NP^A \neq \coNP^A$ for every
p-betting-game random oracle $A$, then $\NP \neq \EXP$. This extension
would hold if there is a $\ptwo$-betting game that is universal for
all $\p$-betting games. However, we do have the following for
$\p$-random oracles.

\begin{corollary} If $\NP^A \neq \coNP^A$ for every $\p$-random oracle
  $A$, then $\NP \neq \EXP$.
  \end{corollary}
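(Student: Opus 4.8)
The plan is to run the same argument as in Corollary~\ref{co:P-NP-random-limitation}, but with Theorem~\ref{th:SigmaKThm} at its first level ($k=1$) playing the role that Theorem~\ref{th:P-NP-measure-limitation} played there. Writing $X_1 = \{A \mid \NP^A = \coNP^A\}$, the hypothesis ``$\NP^A \neq \coNP^A$ for every $\p$-random oracle $A$'' is precisely the assertion that no member of $X_1$ is $\p$-random; unwinding the definition of $\p$-randomness, every $A \in X_1$ therefore lies in the success set $S^\infty[d]$ of some $\p$-martingale $d$. The goal is to amalgamate this whole family of $\p$-martingales into a single resource-bounded object that succeeds on all of $X_1$, and then hand $X_1$ to Theorem~\ref{th:SigmaKThm}.

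First I would invoke the universality of $\ptwo$-measure for $\p$-measure \cite{Lutz:AEHNC}: there is one $\ptwo$-martingale $d'$ with $S^\infty[d] \subseteq S^\infty[d']$ for every $\p$-martingale $d$. Since each $A \in X_1$ belongs to $S^\infty[d]$ for some $\p$-martingale $d$, it belongs to $S^\infty[d']$; hence $d'$ succeeds on every language in $X_1$, so $X_1$ has $\ptwo$-measure $0$. Next I would pass from martingale measure to betting-game measure: a martingale is just a betting game that always queries the lexicographically least unqueried string, and the $\ptwo$ bound on $d'$ transfers to the betting-game time bound $t(2^n) = 2^{n^{O(1)}}$, so $X_1$ also has $\ptwo$-betting-game measure $0$. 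Indeed this implication is essentially built into the definitions of Section~\ref{sec:prelim}, where $\Delta$-measure $0$ is the special case of $\Delta$-betting-game measure $0$ in which the witnessing betting game is a martingale. Applying Theorem~\ref{th:SigmaKThm} with $k=1$ now gives $\NP = \Sigma_1^\P \neq \EXP$.

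The load-bearing step is the appeal to the universal $\ptwo$-martingale, and it is exactly the ingredient that has no known counterpart for betting games. This is why Corollary~\ref{co:NP-coNP} could only assume the \emph{global} hypothesis that $\{A \mid \NP^A \neq \coNP^A\}$ has $\p$-betting-game measure $1$, whereas here the weaker \emph{per-oracle} hypothesis is enough: resource-bounded martingale measure enjoys a union lemma (equivalently, a universal martingale), so the individual $\p$-martingales witnessing non-randomness of the oracles in $X_1$ can be combined, something not known to be possible for $\p$-betting games. The remaining points are routine: the membership translation of the hypothesis, and the observation that $\ptwo$-measure $0$ implies $\ptwo$-betting-game measure $0$.
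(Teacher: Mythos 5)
Your proposal is correct and matches the argument the paper intends: the corollary is stated without an explicit proof, but the surrounding discussion makes clear it is obtained exactly as you describe, by using the universal $\ptwo$-martingale to conclude that $\{A \mid \NP^A = \coNP^A\}$ has $\ptwo$-measure $0$ (hence $\ptwo$-betting-game measure $0$) and then applying Theorem~\ref{th:SigmaKThm} with $k=1$. Your closing remark about why the per-oracle hypothesis suffices here but not in Corollary~\ref{co:NP-coNP} is precisely the point the paper makes about the missing union lemma for betting games.
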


\begin{corollary} If $\PH^A$ is infinite for every $\p$-random oracle
  $A$, then $\PH \neq \EXP$.
\end{corollary}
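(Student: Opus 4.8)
The plan is to run the argument behind the two preceding $\p$-random corollaries at every level $k$ at once, and then fold the infinitely many conclusions into a single statement relating $\PH$ and $\EXP$. First I would record the standard characterization that $\PH^A$ is infinite if and only if $\Sigma_k^{\P,A} \neq \Pi_k^{\P,A}$ for every $k \geq 1$; equivalently, $\PH^A$ fails to be infinite exactly when $A \in \bigcup_{k \geq 1} X_k$, where $X_k = \{A \mid \Sigma_k^{\P,A} = \Pi_k^{\P,A}\}$ as in Theorem \ref{th:SigmaKThm}. The only nontrivial direction here is the usual relativized collapse fact: a level-$k$ equality $\Sigma_k^{\P,A} = \Pi_k^{\P,A}$ propagates upward and collapses all of $\PH^A$ to level $k$, so it is not infinite.

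Next I would use the hypothesis. If every $\p$-random oracle $A$ has $\PH^A$ infinite, then no $X_k$ can contain a $\p$-random oracle, so each $X_k$ is contained in the class of non-$\p$-random languages. By the universality of $\ptwo$-measure for $\p$-measure \cite{Lutz:AEHNC}, there is a single $\ptwo$-martingale $d'$ that succeeds on every language on which some $\p$-martingale succeeds; hence the class of non-$\p$-random languages has $\ptwo$-measure $0$, and therefore so does each subset $X_k$. Since a martingale is in particular a betting game, the same $d'$ witnesses that each $X_k$ has $\ptwo$-betting-game measure $0$. Applying Theorem \ref{th:SigmaKThm} once for each $k$ then yields $\Sigma_k^\P \neq \EXP$ for every $k \geq 1$. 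This part is exactly the argument used for the $\NP$ versus $\coNP$ corollary, now applied uniformly in $k$; crucially, no union lemma for betting games is needed, because universality is invoked only for $\p$-martingales, where it is available.

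The final step is to convert ``$\Sigma_k^\P \neq \EXP$ for all $k$'' into ``$\PH \neq \EXP$,'' which I would argue contrapositively. If $\PH = \EXP$, fix a $\leqpm$-complete language $L$ for $\EXP$. Then $L \in \EXP = \PH = \bigcup_k \Sigma_k^\P$, so $L \in \Sigma_k^\P$ for some $k$, and since $\Sigma_k^\P$ is closed downward under $\leqpm$ this gives $\EXP \subseteq \Sigma_k^\P \subseteq \EXP$, i.e. $\Sigma_k^\P = \EXP$ for that $k$, contradicting the previous paragraph. Hence $\PH \neq \EXP$.

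I expect the only real subtlety to be this last combining step. The implication is not the (false in general) claim that ``each level differs from $\EXP$, so the union differs''; it relies essentially on $\EXP$ possessing $\leqpm$-complete sets, so that the assumption $\PH = \EXP$ already forces $\PH$ to collapse to a finite level, at which point the per-level conclusions apply. Everything else reduces to the measure bookkeeping of the earlier corollaries, carried out uniformly over $k$.
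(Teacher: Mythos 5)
Your argument is correct and is exactly the derivation the paper leaves implicit: reduce to the classes $X_k$ via the standard relativized collapse fact, use the universal $\ptwo$-martingale to give each $X_k$ $\ptwo$-(betting-game) measure $0$, apply Theorem \ref{th:SigmaKThm} at every level, and finish with $\leqpm$-completeness of $\EXP$ to rule out $\PH = \EXP$. The one point you flag as subtle --- that the per-level conclusions combine only because $\PH = \EXP$ would force a collapse to a finite level --- is indeed the only step not already spelled out by the preceding corollaries, and you handle it correctly.
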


\section{Conclusion}

We have shown that $\P^A \neq \NP^A$ for every p-betting-game random
oracle $A$ (Theorem \ref{th:betting}). Establishing whether this also
holds for p-random oracles would imply either $\BPP \neq \EXP$
(Corollary \ref{co:P-NP-random-limitation}) or $\P
\neq \PSPACE$ (Corollary \ref{co:P-NP-random-limitation-2}). These
results, together with Theorems \ref{th:pspace-P-NP} and
\ref{th:SigmaKThm}, motivate investigating the status of $\PH$ relative
to pspace-random oracles. In particular:

\begin{enumerate}
\item Does $\{ A \mid \NP^A = \coNP^A \}$ have $\pspace$-measure 0?
\item More generally, does $\{ A \mid \PH^A \textrm{ collapses} \}$
  have $\pspace$-measure 0?
\end{enumerate}

 \bibliographystyle{plain}

\end{document}